\theoremstyle{definition} 
\theoremstyle{definition}
\newtheorem {lemma} {Lemma}
\newcommand{\kb}[1]{\ket{#1}\bra{#1}}
\newcommand{\re}[2]{\mathcal{R}_{#1,#2}}
\begin{document}

\title{Asymptotic Analysis of a Three State Quantum Cryptographic Protocol}
\author{
\IEEEauthorblockN{Walter O. Krawec}
\IEEEauthorblockA{Computer Science Department\\
Iona College\\
New Rochelle, NY 10801\\
Email: walter.krawec@gmail.com}
}

\maketitle
\begin{abstract}
In this paper we consider a three-state variant of the BB84 quantum key distribution (QKD) protocol.  We derive a new lower-bound on the key rate of this protocol in the asymptotic scenario and use mismatched measurement outcomes to improve the channel estimation.  Our new key rate bound remains positive up to an error rate of $11\%$, exactly that achieved by the four-state BB84 protocol.
\end{abstract}

\begin{IEEEkeywords}
Quantum Key Distribution, Cryptography
\end{IEEEkeywords}

\section{Introduction}

A quantum key distribution (QKD) protocol is designed to allow two users Alice $A$ and Bob $B$ to establish a shared secret key that is secure against an all powerful adversary Eve $E$.  Typically, these protocols operate by $A$ preparing and sending qubits to $B$ who will then measure them; both the preparation and the subsequent measurements are performed in a variety of bases as dictated by the protocol.  This \emph{quantum communication stage} results in $A$ and $B$ each distilling a \emph{raw key}: a string of classical bits of size $N$ which is partially correlated and partially secret.  The users will then run an error correcting (EC) and privacy amplification (PA) protocol resulting in a secret key of size $\ell(N) \le N$.  In this paper, we are interested in the key rate in the asymptotic scenario defined $r = \lim_{N\rightarrow\infty}\frac{\ell(N)}{N}$.  See \cite{QKD-survey} for more information on these standard definitions and processes.

It was shown in \cite{QKD-renner-keyrate} that, assuming collective attacks (where $E$ performs the same attack operation each iteration but is free to postpone the measurement of her ancilla to any future time), then $r = \inf(S(A|E) - H(A|B))$.  Here $S(A|E)$ is the conditional von Neumann entropy; $H(A|B)$ is the conditional classical entropy, and the infimum is over the set of all collective attacks which induce the observed statistics.

In this paper, we will compute a new lower bound on the key rate of a three-state protocol first introduced in \cite{QKD-BB84-three-state,QKD-BB84-three-state-v2}.  Such a protocol allows $A$ to only send $\ket{0}$, $\ket{1}$, or $\ket{+}$ (we will actually consider a generalized version where the third state is $\ket{a} = \alpha\ket{0} + \sqrt{1-\alpha^2}\ket{1}$ for any $\alpha\in(0,1)$); thus the users cannot measure the probability of $E$'s attack flipping a $\ket{-}$ to a $\ket{+}$ as can be done in the four-state BB84 \cite{QKD-BB84} protocol.  However, by using mismatched measurement outcomes \cite{QKD-mismatch1} we can impose further restrictions on the set of possible attacks used by $E$ thus improving the key rate bound (in particular, we will not discard, as is typically done, the measurement outcomes when $A$ and $B$'s choice of basis do not agree).

Our results show that, if we assume $E$'s attack is symmetric (which could even be enforced), then the three state protocol's maximally tolerated error rate is equal to that of the four-state BB84 protocol - i.e., our key rate bound (which will be a function only of parameters that may be observed by $A$ and $B$) remains positive up to an error rate of $11\%$.  We will also consider how the choice of $\ket{a}$ affects this rate.

In \cite{QKD-Tom-threestate1}, mismatched measurement bases were also used to show this three-state protocol's key rate was equal to that of the full four-state BB84.  However, in this paper, we provide an alternative proof of this result using different methods.  The technique we derive here may be easily extended to other QKD protocols.  We also discuss the choice of $\alpha$.

Also, mismatched measurement outcomes were used in \cite{QKD-mismatch1} with the BB84 (a four-state protocol) and the six-state BB84 protocols; it was also shown to produce a superior key rate for certain quantum channels, for those two protocols.  In \cite{QKD-mismatch2} they were used to detect an attacker with greater probability for measure/resend attacks.  We also used them in \cite{SQKD-Krawec-ReflectSecurity} in the proof of security for a semi-quantum QKD protocol.


For notation, we denote by $\rho_{AB}$ to mean a density operator acting on the joint Hilbert space $\mathcal{H}_A\otimes\mathcal{H}_B$.  If we write $\rho_{A}$ then we mean the operator resulting from the tracing out of $B$'s subspace (i.e., $\rho_A = tr_B\rho_{AB}$).  These definitions extend to three or more subspaces.  By $S(AB)_\rho$ we mean the von Neumann entropy $S(\rho_{AB})$ and $S(A|B)_\rho$ to mean $S(AB)_\rho - S(B)_\rho$.  We use $H(\cdot)$ to denote the classical Shannon entropy and $h(x)$ to be the binary entropy function (i.e., $h(x) = H(x, 1-x) = -x\log x - (1-x)\log(1-x)$).  All logarithms in this paper are base two.

\section{The Protocol}

The protocol we consider in this paper is a three state one and it is a generalization of the protocol described in \cite{QKD-BB84-three-state,QKD-BB84-three-state-v2}.  Let $\mathcal{B} = \{\ket{0}, \ket{1}\}$ be an arbitrary orthonormal basis and let $\mathcal{A} = \{\ket{a}, \ket{\bar{a}}\}$, where $\ket{a} = \alpha\ket{0} + \sqrt{1-\alpha^2}\ket{1}$, $\ket{\bar{a}} = \sqrt{1-\alpha^2}\ket{0} - \alpha\ket{1}$, and $\alpha \in (0,1)$.  Clearly $\mathcal{A}$ is also an orthonormal basis; note that when $\alpha = 1/\sqrt{2}$, we have $\ket{a} = \ket{+}$ and $\ket{\bar{a}} = \ket{-}$ where $\ket{\pm} = 1/\sqrt{2}(\ket{0} \pm \ket{1})$.  The value of $\alpha$ is considered to be public knowledge.

The quantum communication stage of the protocol consists of the following process:

\begin{enumerate}
\item $A$ prepares a qubit of the form $\ket{0}$, $\ket{1}$, or $\ket{a}$, choosing each with probability $p/2$, $p/2$, and $1-p$ respectively.  This qubit is sent to $B$.  Note that $A$ cannot prepare $\ket{\bar{a}}$.
  \item $B$ chooses with probability $q$ to measure the qubit in the $\mathcal{B} = \{\ket{0}, \ket{1}\}$ basis; otherwise, he measures in the $\mathcal{A} = \{\ket{a}, \ket{\bar{a}}\}$ basis.
  \item $A$ and $B$ will disclose their choice of basis (using an authenticated classical channel).  If their choice of basis is $\mathcal{B}$, they will use this iteration to contribute towards their raw key in the obvious way.
\end{enumerate}

Note that, when $\alpha = 1/\sqrt{2}$, this protocol is exactly the three-state version of BB84 discussed in \cite{QKD-BB84-three-state,QKD-BB84-three-state-v2}; that is, it is BB84 with the limitation that $A$ can never send $\ket{-}$ and thus the users can never measure the probability of $E$'s attack flipping a $\ket{-}$ to a $\ket{+}$ (they can only measure the probability of a $\ket{+}$ flipping to a $\ket{-}$).  As discussed in \cite{QKD-BB84-three-state}, there are several potential practical benefits to this protocol.  It is also interesting theoretically as it allows us to study the effects of decreasing $A$'s required quantum capabilities.

\section{Security Proof}

To compute a lower bound on the key rate $r$, we must first describe the quantum system after one iteration of the protocol, conditioning on the event this iteration is used to contribute towards the raw key (in particular, $A$ sends a state from $\mathcal{B}$ and $B$ measures in that same basis).  We will first assume collective attacks; later we will comment on general attacks.

Fix $\alpha \in (0,1)$; this parameter is public knowledge (in particular $E$ also knows the value this is set to) and furthermore, once fixed it is constant throughout the protocol.  Let $U$ be the unitary attack operator $E$ employs each iteration of the protocol.  Since we are conditioning on the event this iteration is used to contribute to the raw key, $A$ will prepare $\ket{0}$ or $\ket{1}$, choosing each with probability $1/2$.  $E$ will then attack with operator $U$.  Without loss of generality, we may assume $E$'s ancilla is cleared to some $\ket{0}_E$ state and so write $U$'s action on basis states as follows:
\begin{align}
U\ket{0,0} = \ket{0,e_0} + \ket{1,e_1} \text{ } , \text{ } U\ket{1,0} = \ket{0,e_2} + \ket{1,e_3}\label{eq:attack-op}
\end{align}
These $\ket{e_i}$ are arbitrary states in $E$'s ancilla and are not assumed to be normalized nor orthogonal.  Unitarity of $U$ of course imposes certain conditions on them; furthermore, parameter estimation will yield even more data on them later.

Thus, the state of the quantum system, when the qubit arrives at $B$'s lab is: $\frac{1}{2}\kb{0}_A \otimes P(\ket{0,e_0} + \ket{1,e_1}) + \frac{1}{2}\kb{1}_A\otimes P(\ket{0,e_2} + \ket{1,e_3})$, where $P(z) = zz^*$ and $z^*$ is the conjugate transpose of $z$.

$B$ will then measure in the $\mathcal{B}$ basis (again, we are conditioning on the event this iteration is used for the raw key) yielding the final quantum state:
\begin{align*}
&\frac{1}{2}\kb{00}_{AB} \otimes \kb{e_0} + \frac{1}{2}\kb{11}_{AB} \otimes \kb{e_3}\\
+&\frac{1}{2}\kb{01}_{AB} \otimes \kb{e_1} + \frac{1}{2}\kb{10}_{AB} \otimes \kb{e_2}\\
\end{align*}
Call this state $\chi_{ABE}$.  We will make the usual assumption that the noise in the $\mathcal{B}$ basis, induced by $E$'s attack, is symmetric: that is $\braket{e_0|e_0} = \braket{e_3|e_3} = 1-Q$ and $\braket{e_1|e_1} = \braket{e_2|e_2} = Q$.  These parameters $\braket{e_i|e_i}$ can obviously be estimated by $A$ and $B$; thus this symmetry condition can even be enforced.  Note that, without this assumption, our analysis could still be carried out, though the algebra is not as amiable.

Assume, for the moment, that $Q > 0$ (we will consider the case $Q=0$ later).  Tracing out $B$, we may write $\chi_{ABE}$ as:
\begin{equation}\label{eq:final-state-chi}
\chi_{AE} = (1-Q)\rho_{AE} + Q\sigma_{AE},
\end{equation}
where:
\begin{align}
\rho_{AE} &= \frac{1}{2}\kb{0}_{A}\otimes\frac{\kb{e_0}}{1-Q} + \frac{1}{2}\kb{1}_A\otimes\frac{\kb{e_3}}{1-Q}\\
\sigma_{AE} &= \frac{1}{2}\kb{0}_A\otimes\frac{\kb{e_1}}{Q} + \frac{1}{2}\kb{1}_A\otimes\frac{\kb{e_2}}{Q}.
\end{align}
It is obvious that both $\rho_{AE}$ and $\sigma_{AE}$ are Hermitian semi-definite operators of unit trace.  Before continuing, we require two lemmas which, though trivial to prove, we include for completeness:
\begin{lemma}\label{lemma1}
Given a finite dimensional Hilbert space $\mathcal{H} = \mathcal{H}_C\otimes\mathcal{H}_E$, let $\{\ket{1}_C, \cdots, \ket{n}_C\}$ be an orthonormal basis of $\mathcal{H}_C$.  Consider the following density operator:
\[
\rho = \sum_{i=1}^n p_i \kb{i}_C \otimes \sigma_E^{(i)},
\]
where $\sum p_i = 1$, $p_i \ge 0$, and each $\sigma_E^{(i)}$ is a Hermitian positive semi-definite operator of unit trace acting on $\mathcal{H}_E$.  Then:
\[
S(\rho) = H(p_1, \cdots, p_n) + \sum_{i=1}^n p_iS\left(\sigma_E^{(i)}\right).
\]
\end{lemma}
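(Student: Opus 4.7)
The plan is to exploit the block-diagonal structure that $\rho$ inherits from the orthonormal basis $\{\ket{i}_C\}$. First, I would observe that because the projectors $\kb{i}_C$ onto distinct basis vectors are orthogonal, the operator $\rho$ already lives in block form: with respect to the decomposition $\mathcal{H} = \bigoplus_i (\ket{i}_C \otimes \mathcal{H}_E)$, the $i$-th diagonal block is precisely $p_i \sigma_E^{(i)}$ and all off-diagonal blocks vanish. This makes the spectrum of $\rho$ easy to compute without diagonalizing directly.

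Next, I would diagonalize each $\sigma_E^{(i)}$ individually, writing its spectral decomposition as $\sigma_E^{(i)} = \sum_j \lambda_{ij} \kb{\psi_{ij}}_E$, where by hypothesis $\lambda_{ij} \ge 0$ and $\sum_j \lambda_{ij} = 1$. The orthonormal set $\{\ket{i}_C \otimes \ket{\psi_{ij}}_E\}_{i,j}$ then simultaneously diagonalizes $\rho$ with eigenvalues $p_i \lambda_{ij}$. Plugging these eigenvalues into the definition of the von Neumann entropy gives
\begin{align*}
S(\rho) &= -\sum_{i,j} p_i \lambda_{ij} \log(p_i \lambda_{ij})\\
&= -\sum_{i,j} p_i \lambda_{ij} \log p_i - \sum_{i,j} p_i \lambda_{ij} \log \lambda_{ij}.
\end{align*}

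Finally, I would simplify each of the two sums using $\sum_j \lambda_{ij} = 1$. The first sum collapses to $-\sum_i p_i \log p_i = H(p_1,\ldots,p_n)$, and the second reorganizes to $-\sum_i p_i \sum_j \lambda_{ij} \log \lambda_{ij} = \sum_i p_i S(\sigma_E^{(i)})$, yielding the claimed identity. There is no real obstacle here beyond bookkeeping; the only point worth flagging is the degenerate case $p_i = 0$, which I would handle by the standard convention $0 \log 0 = 0$ so that vanishing blocks contribute nothing to either side of the equation.
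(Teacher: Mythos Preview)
Your argument is correct: exploiting the block-diagonal structure of $\rho$ with respect to the decomposition $\bigoplus_i \ket{i}_C\otimes\mathcal{H}_E$, reading off the eigenvalues $p_i\lambda_{ij}$, and then splitting $\log(p_i\lambda_{ij})$ is exactly the standard route, and your handling of the $0\log 0$ convention covers the degenerate cases (the same convention also absorbs any $\lambda_{ij}=0$). As for comparison, the paper does not actually supply an in-line proof of this lemma; it simply cites \cite{SQKD-Krawec-SecurityProof}. Your self-contained spectral computation is therefore more explicit than what appears here, and nothing further is needed.
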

\begin{proof}
See \cite{SQKD-Krawec-SecurityProof} for a proof.
\end{proof}

\begin{lemma}\label{lemma2}
Given a finite dimensional Hilbert space $\mathcal{H} = \mathcal{H}_A\otimes\mathcal{H}_E$ and the following density operators:
\begin{align*}
\rho_{AE} &= p_0 \kb{0}_A \otimes \rho_E^0 + p_1\kb{1}_A \otimes \rho_E^1\\
\sigma_{AE}&= q_0\kb{0}_A \otimes \sigma_E^0 + q_1\kb{1}_A \otimes \sigma_E^1\\\\
\chi_{AE} &= p_\rho\rho_{AE} + p_{\sigma}\sigma_{AE},
\end{align*}
then the following is true:
\begin{align}
&S(A|E)_\chi \ge p_\rho S(A|E)_\rho + p_\sigma S(A|E)_\sigma\label{eq:thm1-1}
\end{align}
\end{lemma}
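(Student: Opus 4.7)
The plan is to prove the inequality by the standard trick of attaching a classical ``flag'' register that records which branch of the mixture produced the state, and then invoking strong subadditivity to discard it.

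\smallskip

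First, I would introduce an auxiliary two-dimensional Hilbert space $\mathcal{H}_F$ with orthonormal basis $\{\ket{0}_F, \ket{1}_F\}$ and define the extended state
\begin{equation*}
\chi_{AEF} = p_\rho\, \rho_{AE} \otimes \kb{0}_F + p_\sigma\, \sigma_{AE} \otimes \kb{1}_F,
\end{equation*}
which by construction satisfies $\mathrm{tr}_F\, \chi_{AEF} = \chi_{AE}$. The key observation is that $\chi_{AEF}$ is classical on the $AF$ system: expanding $\rho_{AE}$ and $\sigma_{AE}$ in their given block-diagonal forms, we can write $\chi_{AEF}$ as a sum over the four orthonormal basis vectors $\ket{i}_A\ket{j}_F$ tensored with appropriate (subnormalized) operators on $E$.

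\smallskip

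Next, I would apply Lemma \ref{lemma1} twice. Applied to $\chi_{AEF}$ (viewed as classical on $A\otimes F$) and to $\chi_{EF} = \mathrm{tr}_A\, \chi_{AEF}$ (classical on $F$), it yields expressions for $S(AEF)_\chi$ and $S(EF)_\chi$ in which the classical entropy contributions from the flag register cancel upon taking the difference. What remains is the identity
\begin{equation*}
S(A|EF)_\chi \;=\; p_\rho\, S(A|E)_\rho + p_\sigma\, S(A|E)_\sigma,
\end{equation*}
i.e.\ conditioning on the classical flag recovers exactly the convex combination appearing on the right-hand side of \eqref{eq:thm1-1}.

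\smallskip

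Finally, strong subadditivity of the von Neumann entropy, in the form ``conditioning reduces conditional entropy,'' gives $S(A|EF)_\chi \le S(A|E)_\chi$, and combining this with the previous display finishes the proof. The only non-routine step is recognizing the right auxiliary extension; once $F$ is in place, everything reduces to bookkeeping with Lemma \ref{lemma1} and one invocation of strong subadditivity, so I do not expect any real obstacle. One minor care point is to verify that the operators appearing as blocks (e.g.\ $\rho_E^0, \sigma_E^0$, etc.) are genuine density operators on $\mathcal{H}_E$ so that Lemma \ref{lemma1} applies; this is immediate from the hypotheses of Lemma \ref{lemma2}.
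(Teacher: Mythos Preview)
Your proposal is correct and is essentially the same argument the paper gives: attach a classical two-dimensional flag register to $\chi_{AE}$, use strong subadditivity in the form $S(A|E)_\chi \ge S(A|EF)_\chi$, and evaluate $S(A|EF)_\chi$ via two applications of Lemma~\ref{lemma1}. The only cosmetic difference is that the paper applies Lemma~\ref{lemma1} with the flag alone as the classical system (yielding $S(AEF)_\chi = h(p_\rho) + p_\rho S(AE)_\rho + p_\sigma S(AE)_\sigma$ and similarly for $S(EF)_\chi$), whereas you propose treating $AF$ jointly as classical; both routes give the same identity $S(A|EF)_\chi = p_\rho S(A|E)_\rho + p_\sigma S(A|E)_\sigma$.
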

\begin{proof}
Let $\mathcal{H}_C$ be the two dimensional Hilbert space spanned by orthonormal basis $\{\ket{X}, \ket{Y}\}$ and let $\chi_{AEC}$ be the following density operator:
\[
\chi_{AEC} = p_\rho \kb{X} \otimes \rho_{AE} + p_\sigma \kb{Y} \otimes \sigma_{AE},
\]
which acts on $\mathcal{H}_A\otimes\mathcal{H}_E\otimes\mathcal{H}_C$.  Observe that $tr_C\chi_{AEC} = \chi_{AE}$.  Due to the strong sub additivity of von Neumann entropy, it holds that: $S(A|E)_\chi \ge S(A|EC)_{\chi}$.  We will show that:
\begin{equation}\label{eq:thm1-3}
S(A|EC)_{\chi} = p_\rho S(A|E)_\rho + p_\sigma S(A|E)_\sigma,
\end{equation}
from which Equation \ref{eq:thm1-1} will follow.  Of course $S(A|EC)_\chi = S(AEC)_\chi - S(EC)_\chi$.  Applying Lemma \ref{lemma1} twice, we have:
\begin{align*}
S(AEC)_\chi &= h(p_\rho) + p_\rho S(AE)_\rho + p_\sigma S(AE)_\sigma\\
S(EC)_\chi &= h(p_\rho) + p_\rho S(E)_\rho + p_\sigma S(E)_\sigma.
\end{align*}

Thus:
\begin{align*}
S(A|EC)_\chi &= S(AEC)_\chi - S(EC)_\chi\\
&= p_\rho (S(AE)_\rho - S(E)_\rho)\\
&+ p_\sigma (S(AE)_\sigma - S(E)_\sigma)\\
&= p_\rho S(A|E)_\rho + p_\sigma S(A|E)_\sigma.
\end{align*}

\end{proof}

Thus, from Lemma \ref{lemma2}, we may compute a lower bound on $S(A|E)_\chi$ (and thus a lower bound on the key rate $r$).  This is:
\[
S(A|E)_\chi \ge (1-Q)\cdot S(A|E)_\rho + Q\cdot S(A|E)_\sigma.
\]

We need now only compute the conditional entropy of the operators $\rho_{AE}$ and $\sigma_{AE}$ individually.  It is clear that:
\[
S(AE)_\rho = S(AE)_\sigma = h(1/2) = 1.
\]

Now, tracing out $A$, yields:
\begin{align}
\rho_E &= \frac{1}{2(1-Q)}(\kb{e_0} + \kb{e_3})\\
\sigma_E &= \frac{1}{2Q}(\kb{e_1} + \kb{e_2}).
\end{align}

We must now compute $S(E)_\rho$ and $S(E)_\sigma$.  To do so, we will need the eigenvalues of these two density operators.  First consider $\rho_E$.  Without loss of generality, we may write:
\begin{align}
\ket{e_0} = z\ket{E} \text{ }, \text{ } \ket{e_3} = he^{i\theta}\ket{E} + d\ket{I},
\end{align}
where $z, h, d \in \mathbb{R}$, $\braket{E|E} = \braket{I|I} = 1$, and $\braket{E|I} = 0$.  Furthermore, we have the following:
\begin{align}
&z^2 = h^2 + d^2 = 1-Q\label{eq:basis-ident-1}\\
&hze^{i\theta} = \braket{e_0|e_3} \Rightarrow h^2z^2 = |\braket{e_0|e_3}|^2.\label{eq:basis-ident-2}
\end{align}
In this $\{\ket{E}, \ket{I}\}$ basis, we may write $\rho_E$ as:
\[
\rho_E = \frac{1}{2(1-Q)} \left( \begin{array}{cc}
z^2 + h^2 & he^{i\theta}d\\\\
he^{-i\theta}d & d^2\end{array}\right),
\]
the eigenvalues of which are readily computed to be:
\[
\lambda_\pm^\rho = \frac{1}{2} \pm\frac{ \sqrt{(z^2+h^2-d^2)^2+4h^2d^2} }{4(1-Q)}
\]
Using identities \ref{eq:basis-ident-1} and \ref{eq:basis-ident-2}, we have:
\begin{align}
\lambda_\pm^\rho &= \frac{1}{2}\pm\frac{ \sqrt{4h^4 + 4h^2(z^2-h^2)} }{4(1-Q)}\notag\\
&=\frac{1}{2} \pm \frac{ \sqrt{h^2z^2} }{2(1-Q)} =\frac{1}{2}\pm\frac{|\braket{e_0|e_3}|}{2(1-Q)}.
\end{align}

Similarly, we may compute the eigenvalues of the two-dimensional operator $\sigma_E$ as:
\begin{equation}
\lambda_\pm^\sigma = \frac{1}{2} \pm \frac{|\braket{e_1|e_2}|}{2Q}.
\end{equation}
(We are still assuming, for now, $Q > 0$.)

Thus, $S(E)_\rho = h(\lambda_+^\rho)$ and $S(E)_\sigma = h(\lambda_+^\sigma)$, and so:
\begin{align*}
S(A|E)_\chi &\ge (1-Q)(1 - h(\lambda_+^\rho)) + Q(1 - h(\lambda_+^\sigma))\\
&\ge 1 - (1-Q)\cdot h(\lambda_+^\rho) - Q\cdot h(\lambda_+^\sigma).
\end{align*}

It is trivial to show that $H(A|B) = h(Q)$.  Thus, the key rate of this three-state protocol is lower bounded by:
\begin{align*}
r &= S(A|E)_\chi - H(A|B)\\
&\ge 1 - (1-Q)\cdot h(\lambda_+^\rho) - Q\cdot h(\lambda_+^\sigma) - h(Q).
\end{align*}

Note that, if $Q = 0$, then $\ket{e_1} \equiv \ket{e_2} \equiv 0$ and so these terms never show up in Equation \ref{eq:final-state-chi}.  In this case, we may define $\lambda_+^\sigma$ arbitrarily and the above key rate bound will still hold.

Therefore, to determine the key rate, $A$ and $B$ must estimate the quantities $|\braket{e_0|e_3}|$ and $|\braket{e_1|e_2}|$.  These cannot be directly observed; however, by using the error rate in the $\mathcal{A}$ basis, along with mismatched measurement results, they may determine bounds on these two quantities.  Before discussing how this is done, however, we mention one last critical detail.  Note that $\lambda_+^\rho$ and $\lambda_+^\sigma$ are both greater than, or equal to, $1/2$.  Note also that the binary entropy function $h(x)$ attains its maximum when $x = 1/2$ and on the interval $[1/2, 1]$ it is a decreasing function.  Thus, if we find values $\lambda^\rho$ and $\lambda^\sigma$, such that $1/2 \le \lambda^\rho \le \lambda_+^\rho$ and $1/2 \le \lambda^\sigma \le \lambda_+^\sigma$, then it will hold that $h(\lambda_+^\rho) \le h(\lambda^\rho)$ (and similarly for $\sigma$).  Thus, we have:
\begin{align}
r &\ge 1 - (1-Q)\cdot h(\lambda_+^\rho) - Q\cdot h(\lambda_+^\sigma) - h(Q)\notag\\
&\ge 1 - (1-Q)\cdot h(\lambda^\rho) - Q\cdot h(\lambda^\sigma) - h(Q).\label{eq:keyrate-bound-final}
\end{align}

We will soon see that it is easier to bound the real parts of $\braket{e_0|e_3}$ and $\braket{e_1|e_2}$.  Therefore, we will define:
\begin{align*}
\lambda^\rho = \frac{1}{2} + \frac{|Re\braket{e_0|e_3}|}{2(1-Q)} \ge \frac{1}{2} \text{ } , \text{ } \lambda^\sigma &= \frac{1}{2} + \frac{|Re\braket{e_1|e_2}|}{2Q} \ge \frac{1}{2}
\end{align*}
(the same discussion before concerning the case when $Q=0$ applies).  It is clear that:
\begin{align*}
\lambda_+^\rho &= \frac{1}{2}+\frac{|\braket{e_0|e_3}|}{2(1-Q)} = \frac{1}{2}+\frac{\sqrt{Re^2\braket{e_0|e_3} + Im^2\braket{e_0|e_3}}}{2(1-Q)}\\
&\ge \frac{1}{2}+\frac{\sqrt{Re^2\braket{e_0|e_3}}}{2(1-Q)} = \lambda^\rho \ge \frac{1}{2},
\end{align*}
and similarly, $\lambda_+^\sigma \ge \lambda^\sigma \ge 1/2$.  To evaluate our key rate bound in Equation \ref{eq:keyrate-bound-final}, we therefore need to determine bounds on the real part only of $\braket{e_0|e_3}$ and $\braket{e_1|e_2}$.

\subsection{Parameter Estimation}

To estimate $Re\braket{e_0|e_3}$ and $Re\braket{e_1|e_2}$, we will consider, first, the noise in the $\mathcal{A}$ basis.  To improve this bound, we will also consider measurement results from mismatched bases (results which are typically discarded).  Before, continuing, let us introduce some notation.  In particular, we will denote by $\re{i}{j}$ to mean $Re\braket{e_i|e_j}$ (for $i,j \in \{0,1,2,3\}$.  By $p_{x,y}$ we mean the probability that if $A$ sends $\ket{x}$ (for $x \in \{0,1,a\}$), then, after $E$'s attack, $B$ measures $\ket{y}$ (for $y \in \{0,1,a,\bar{a}\}$).  For instance, $p_{0,1} = \braket{e_1|e_1} = Q$.  Finally, let $\beta = \sqrt{1-\alpha^2}$ (and so $\ket{a} = \alpha\ket{0} + \beta\ket{1}$).

Now, consider the quantity $p_{a,\bar{a}}$ which, to avoid confusion with $p_{a,a}$, we will also denote $Q_\mathcal{A}$.  This represents the error in the $\mathcal{A}$ basis (note that, unlike in a four-state protocol, $A$ and $B$ cannot directly measure $p_{\bar{a}, a}$).  By linearity of $U$, we have (see Equation \ref{eq:attack-op}):
\begin{align}
U\ket{a} &= \ket{0}(\alpha\ket{e_0} + \beta\ket{e_2}) + \ket{1}(\alpha\ket{e_1} + \beta\ket{e_3})\label{eq:PE-Ua}\\\notag\\
&=\ket{a}(\alpha^2\ket{e_0} + \alpha\beta\ket{e_2} + \alpha\beta\ket{e_1} + \beta^2\ket{e_3})\notag\\
&+\ket{\bar{a}}(\beta\alpha\ket{e_0} + \beta^2\ket{e_2} - \alpha^2\ket{e_1} - \alpha\beta\ket{e_3}).\notag
\end{align}
Thus:
\begin{align}
Q_\mathcal{A} &= \alpha^2\beta^2(\braket{e_0|e_0} + \braket{e_3|e_3}) + \beta^4\braket{e_2|e_2} + \alpha^4\braket{e_1|e_1}\notag\\
&+2Re(\beta^3\alpha\braket{e_0|e_2} - \beta\alpha^3\braket{e_0|e_1} - \alpha^2\beta^2\braket{e_0|e_3}\notag\\
&-\alpha^2\beta^2\braket{e_1|e_2} - \alpha\beta^3\braket{e_2|e_3} + \alpha^3\beta\braket{e_1|e_3}).\label{eq:PE-A-noise}
\end{align}

Of course, $\alpha$ (and thus $\beta$) are public knowledge; also the quantities $\braket{e_i|e_i}$ can be estimated using the $\mathcal{B}$-basis noise.  That leaves: $\re{0}{2}, \re{0}{1}, \re{0}{3}, \re{1}{2}, \re{2}{3}, \re{1}{3}$.  Most of these, however, may be estimated using mismatched measurement results.  Consider the quantity $p_{0,a}$.  This is a value that would ordinarily be discarded as $A$ and $B$ used different bases; yet, this probability, which can be estimated by $A$ and $B$, will lead to an estimate of $\re{0}{1}$.  Indeed:
\begin{align*}
U\ket{0} &= \ket{0,e_0} + \ket{1,e_1}\\
&= \ket{a}(\alpha\ket{e_0} + \beta\ket{e_1}) + \ket{\bar{a}}(\beta{e_0} - \alpha\ket{e_1}),
\end{align*}
and so:
\begin{align}
&p_{0,a} = \alpha^2 \braket{e_0|e_0} + \beta^2\braket{e_1|e_1} + 2\alpha\beta\re{0}{1}\notag\\
\Rightarrow& \re{0}{1} = \frac{p_{0,a} - \alpha^2(1-Q) - \beta^2Q}{2\alpha\beta}.\label{eq:PE-01}
\end{align}
On observing $Q$, along with $p_{0,a}$, $A$ and $B$ immediately may estimate $\re{0}{1}$.  Similarly, they may use $p_{1,a}$ to estimate $\re{2}{3}$:
\begin{equation}
\re{2}{3} = \frac{p_{1,a} - \alpha^2Q - \beta^2(1-Q)}{2\alpha\beta}.
\end{equation}

When $\alpha = 1/\sqrt{2}$ (as dictated by the original three-state protocol \cite{QKD-BB84-three-state,QKD-BB84-three-state-v2}), $A$ and $B$ need not estimate $\re{0}{2}$ and $\re{1}{3}$ due to the fact that unitarity of $U$ (which imposes the condition $\re{0}{2} = -\re{1}{3}$) will force the terms to cancel in Equation \ref{eq:PE-A-noise}.  For other values of $\alpha$, however, an estimate will be necessary and it may be accomplished by considering $p_{a,0}$.  From Equation \ref{eq:PE-Ua} this is:
\begin{align}
&p_{a,0} = \alpha^2\braket{e_0|e_0} + \beta^2\braket{e_2|e_2} + 2\alpha\beta\re{0}{2}\notag\\
\Rightarrow& \re{0}{2} = \frac{p_{a,0} - \alpha^2(1-Q) - \beta^2Q}{2\alpha\beta}.\label{eq:PE-02}
\end{align}
Since unitarity of $U$ forces the relation $\re{1}{3} = -\re{0}{2}$, $A$ and $B$ now have estimates of all quantities in Equation \ref{eq:PE-A-noise} except for $\re{0}{3}$ and $\re{1}{2}$.  However, the Cauchy-Schwarz inequality forces $\re{1}{2} \in [-Q,Q]$.  This, combined with Equation \ref{eq:PE-A-noise}, allow the users to bound $\re{0}{3}$ and thus evaluate $r$ (one must simply find the minimum $r$ over all $\re{1}{2} \in [-Q,Q]$).  For our evaluations, we performed this optimization numerically; however finding an analytic solution would be straight-forward.  Finally, as the protocol is permutation invariant, the results of \cite{QKD-general-attack,QKD-general-attack2} apply and thus our key rate bound holds even against the most general of attacks (not only collective).

\subsection{Evaluation}

\begin{figure}
  \centering
 \includegraphics[width=200pt]{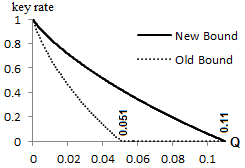}
\caption{Comparing our new key rate bound (for any $\alpha \in (0,1)$) with the one from \cite{QKD-BB84-three-state-v2} (which did not use mismatched measurement outcomes).}\label{fig:symmetric}
\end{figure}

To evaluate our key rate bound, we will first consider the case that $E$'s attack is symmetric in that it may be modeled as a depolarization channel (this is a common assumption in QKD protocol security proofs; in fact, it could even be enforced by the users).  Consider a depolarization channel with parameter $Q$: $\mathcal{E}_Q(\rho) = (1-2Q)\rho + QI$ where $I$ is the two-dimensional identity operator.  Then, if $A$ sends $\ket{i} \in \mathcal{B}$, the probability of measuring $\ket{1-i}$ is $Q$ as desired.  Thus $\braket{e_1|e_1} = \braket{e_2|e_2} = Q$.  Furthermore, if $A$ sends $\ket{0}$, then the qubit's state when it arrives at $B$'s lab is:
\[
\mathcal{E}_Q(\kb{0}) = (1-2Q)\kb{0} + Q(\kb{0} + \kb{1}),
\]
and so we have $p_{0,a} = (1-2Q)\alpha^2 + Q$ (note that if $\alpha = 1/\sqrt{2}$ and thus $\ket{a} = \ket{+}$, we have $p_{0,+} = 1/2$ as expected).  Using this value in Equation \ref{eq:PE-01} yields:
\begin{align*}
\re{0}{1} &= \frac{(1-2Q)\alpha^2+Q - \alpha^2(1-Q) - \beta^2Q}{2\alpha\beta}\\
&=\frac{\alpha^2-2\alpha^2Q+Q-\alpha^2+\alpha^2Q-(1-\alpha^2)Q}{2\alpha\beta} = 0.
\end{align*}
Similarly, we find $\re{2}{3} = 0$.

If $\alpha = 1/\sqrt{2}$, we are done; otherwise, we must consider $p_{a,0}$.  If $A$ sends $\ket{a}$, then the qubit arriving at $B$'s lab is:
\begin{equation}
\mathcal{E}_Q(\kb{a}) = (1-2Q)\kb{a} + Q(\kb{a}+\kb{\bar{a}}),\label{eq:eval-pw}
\end{equation}
from which it is clear that $p_{a,0} = (1-2Q)\alpha^2 + Q$.  Substituting into Equation \ref{eq:PE-02} we find that $\re{0}{2} = 0$.  Thus also $\re{1}{3} = -\re{0}{2} = 0$.

Substituting this into Equation \ref{eq:PE-A-noise} and solving for $\re{0}{3}$ yields the expression:
\begin{align}
\re{0}{3} &= \frac{2\alpha^2\beta^2(1-Q) + (\beta^4+\alpha^4)Q - Q_\mathcal{A} - 2\alpha^2\beta^2\re{1}{2}}{2\alpha^2\beta^2}\notag\\
&=1 - 2Q + \frac{Q - Q_\mathcal{A}}{2\alpha^2\beta^2} - \re{1}{2},
\end{align}
where, above, we used the fact that: $1 = (\alpha^2 + \beta^2)^2 = \alpha^4 + 2\alpha^2\beta^2 + \beta^4 \Rightarrow \alpha^4 + \beta^4 = 1 - 2\alpha^2\beta^2$.

Finally, from Equation \ref{eq:eval-pw}, we find $Q_\mathcal{A} = Q$ and so: $\re{0}{3} = 1 - 2Q - \re{1}{2}$.  Note that, under this (entirely enforceable) assumption of a depolarization channel, the parameters $\alpha$ and $\beta$ do not show up in the above expression; they therefore do not appear in the evaluation of $r$.  Thus, in this symmetric case, the value of $\alpha$ is irrelevant (at least in the perfect qubit, asymptotic scenario - we make no claims about its relevance in more practical scenarios; it also is relevant in non-symmetric scenarios as we soon show).

To evaluate the key rate $r$, we must now simply minimize $r$ over all $\re{1}{2} \in [-Q,Q]$.  We performed this computation numerically, resulting in the key rate shown in Figure \ref{fig:symmetric}.  Notice that the key rate is positive for all $Q \le 11\%$; this is exactly the tolerance supported by the four state BB84 protocol.  Compare this with the lower-bound from \cite{QKD-BB84-three-state-v2} which did not make use of mismatched measurement bases and which only remained positive for $Q \le 5.1\%$.  In fact, we found that, in this symmetric case, our new key rate bound agrees exactly with that of the four-state BB84 protocol $1-h(Q) - h(Q_X)=1-2h(Q)$ where $Q_X$ is the error in the $X$ ($\{\ket{\pm}\}$) basis \cite{QKD-renner-keyrate}.  Furthermore, this is true for any $\alpha \in (0,1)$.  This provides an alternative proof to the result from \cite{QKD-Tom-threestate1}.

While $\alpha$ does not appear when $E$'s attack is symmetric; this of course is not true in other cases.  We considered the effect of various settings for $\alpha$ when $Q_\mathcal{A} = 2Q$ (Figure \ref{fig:fig2} (a)) and $Q_\mathcal{A} = Q/2$ (Figure \ref{fig:fig2} (b)); we also plotted the old key rate bound from \cite{QKD-BB84-three-state-v2} for comparison.  In Figure \ref{fig:fig2} (c) and (d) we considered the case when $p_{a,0}$ and $p_{1,a}$ are such that $\re{0}{2}$ and $\re{2}{3}$ are non-zero.  As $\alpha$ is a parameter that must be set before the protocol runs, it seems $\alpha=1/\sqrt{2}$ is a good compromise (other settings can do better or worse depending on the attack used).

\begin{figure}
  \subfloat[$Q_\mathcal{A} = 2Q$]{\includegraphics[width=115pt]{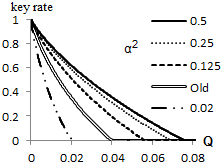}} \qquad
  \subfloat[$Q_\mathcal{A} = Q/2$]{\includegraphics[width=115pt]{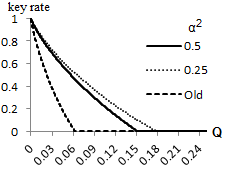}} \qquad
  \subfloat[$\re{0}{2} = -\sqrt{Q(1-Q)}$]{\includegraphics[width=115pt]{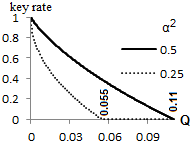}} \qquad
  \subfloat[$\re{2}{3} = \sqrt{Q(1-Q)}$]{\includegraphics[width=115pt]{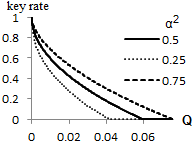}} \qquad
\caption{Showing how $\alpha^2$ affects the key rate in various noise scenarios.  Also comparing with the ``Old'' bound from \cite{QKD-BB84-three-state-v2} in (a) and (b).  (a): when $Q_\mathcal{A} = 2Q$. (b): when the $\mathcal{A}$ noise is half. (c): When the noise in both bases are equal, but $p_{a,0}$ is such that $\re{0}{2} = -\sqrt{Q(1-Q)}$ (the largest negative it could be). (d): Like (c), but now $\re{2}{3} = \sqrt{Q(1-Q)}$ (while $\re{0}{2} = 0$).}\label{fig:fig2}
\end{figure}

\section{Closing Remarks}

We have derived a new proof of security and key-rate bound for a three state BB84 protocol - a protocol where $A$ sends only $\ket{0}$, $\ket{1}$, or $\ket{a} = \alpha\ket{0} + \sqrt{1-\alpha^2}\ket{1}$.  Furthermore we have shown that this new key rate bound, in addition to the use of mismatched measurement outcomes, can tolerate the same maximal noise level as the four state BB84 (i.e., the key rate is positive for all $Q \le 11\%$) in the asymptotic scenario.  The technique we used in our proof - especially our use of mismatched measurement outcomes in this manner to estimate $\re{i}{j}$ despite the lack of the statistic $p_{\bar{a},a}$ - may hold application in the analysis of other QKD protocols where one (or both) party is limited.  This is a subject we intend to investigate in the near future.  

\end{document}